
\documentclass[11pt]{article}

\usepackage{amsfonts}
\usepackage{amssymb}
\usepackage{amsmath}
\usepackage{amsthm}
\usepackage{mathrsfs}
\usepackage{yhmath}
\usepackage{subeqnarray}

\addtolength{\textwidth}{10mm}
\addtolength{\marginparwidth}{-5mm}

\newcommand{\cM}{{\mathcal M}}
\newcommand{\RR}{\mathbb{R}}

\newcounter{compteurlist}
\newenvironment{listeri}[1]{\begin{list}%
{\mbox{}\hfill\mbox{$(\roman{compteurlist})$}}{\usecounter{compteurlist}
\setlength{\leftmargin}{0.75cm}
\setlength{\labelwidth}{1cm}
\setlength{\rightmargin}{0cm}}
#1
}{\end{list}}
\def\endproof{\vbox{\hrule height0.6pt\hbox{%
   \vrule height1.3ex width0.6pt\hskip0.8ex
   \vrule width0.6pt}\hrule height0.6pt
  }} 

\newcommand{\ME}{\mathscr{M}}
\newcommand{\NE}{\mathscr{N}}
\newcommand{\FE}{\mathscr{F}}
\newcommand{\SE}{\mathscr{S}}

\DeclareMathOperator{\pr}{pr}
\DeclareMathOperator{\dimdiff}{dim\ diff}
\def\EOP{\ \hfill \rule{0.5em}{0.5em} }

\newtheorem{theorem}{\sc Theorem}[section]

\newtheorem{proposition}{\sc Proposition}[section]
\newtheorem{remark}{\sc Remark}[section]
\newtheorem{definition}{\sc Definition}[section]
\newtheorem{example}{\sc Example}[section]

\title{\textbf{Diffieties and Liouvillian Systems}} \author{Abdelkader
  Chelouah\thanks{The author would like to thank Prof. Michel Fliess
    for helpful discussions and his constant support during this
    work.}}

\begin{document}
\date{}

\maketitle

{\footnotesize \centerline{\textit{Laboratoire d'Informatique (LIX)}}
\centerline{\textit{\'Ecole Polytechnque}} \centerline{\textit{91128
     Palaiseau Cedex, France}}
\centerline{\texttt{chelouah@lix.polytechnique.fr}}}

\pagestyle{empty}

\begin{abstract}
  Liouvillian systems were initially introduced in \cite{kad0} and can
  be seen as a natural extension of differential flat systems. Many
  physical non flat systems seem to be Liouvillian (cf
  \cite{Sira:99ACC,Sira:99aACC,Chelouah:03FM,Kiss:IJRR2002}). We
  present in this paper an alternative definition to this class of
  systems using the language of diffieties and infinite prolongation
  theory.
\end{abstract}

\section{Introduction}
Liouvillian systems were initially defined in the differential algebra
setting. We give here a new formulation using the language of
diffieties and infinite dimensional geometries. This mathematical
framework is well suited to study Liouvillian systems.  Recall that
one of the main property of flat systems is that the variables of the
system (state, inputs) can be directly expressed, without any
integration of differential equations, in terms of the flat output and
a finite number of its time derivative. Liouvillian systems share a
similar property. To be able to derive the trajectories of a
Liouvillian system, we also need some elementary integrations called
quadratures. This can be illustrate through the following academic
example
\begin{equation}
  \label{eq:exaca}
\begin{array}{rcl}
\dot{x}_1 &=& x_2 + x_i^2,\\
\dot{x}_2 &=& x_3,\\
\dot{x}_3 &=& u.
\end{array}
\end{equation}
It is quite easy to show that (\ref{eq:exaca}) is flat for $i=1,2$ and
not flat for $i=3$ (\cite{charlet:91siam}). However, for $i=3$, the
subsystem
\begin{equation}
  \label{eq:exacaa}
\begin{array}{rcl}
\dot{x}_2 &=& x_3,\\
\dot{x}_3 &=& u.
\end{array}
\end{equation}
is flat with a flat output $y=x_2$ and the trajectory of $x_1$ can be
obtained by mean of an elementary integration $x_1=\int y+\dot{y}^2$.

For the sake of convenience, we first recall, in sections
\ref{sec:diffieties} and \ref{sec:flat}, some facts concerning the
theory of diffieties and the Lie-B\"acklund approach to equivalence
and flatness (cf
\cite{Fliess:CRAS93,Fliess:AIHP97,Fliess:MC98,Fliess:TAC99,Zharinov:book92}).
In section \ref{sec:liouvillian_systems}, we define Liouvillian
systems using the language of diffieties. Finally, we illustrate the
class of Liouvillian through the concrete case of rolling bodies
(\cite{Chelouah:03FM}, \cite{Kiss:IJRR2002}, \cite{Kiss:These}).

\section{The language of diffieties}
\label{sec:diffieties}
Let $I$ be a countable set of cardinality $\ell$, which may be finite
or not, and $\RR^I$ the linear space of all real-valued functions
$x=(x^i)$ on $I$. The space $\RR^I$ has the natural topology of the
Euclidean space if $I$ is finite and the Fr\'echet topology otherwise.
The elements $x^i$, $i\in I$, are called coordinates. For an open set
$U\subset \RR^I$ we denote by $C^{\infty}(U)$ the space of all
real-valued functions on $U$ that depend on finitely many coordinates
and are smooth as functions of a finite number of variables.  A chart
on a set $M$ is a $3$-tuple $(U,\varphi,\RR^{I})$, where $U$ is a
subset of $M$, $\varphi$ is a bijection of $U$ onto an open subset
$\varphi(U)$. The notions of smooth charts and smooth atlases can be
defined as in the finite dimensional case. The set $M$, equipped with
an equivalence class of smooth atlases, is called a $C^{\infty}$
$\RR^{I}$-manifold. The number $\ell$ does not depend on a chart
$(U,\varphi,\RR^{I})$ and is called the dimension of the smooth
manifold $M$.

A diffiety is a pair $\ME=(M,CTM)$ where $M$ is a $C^{\infty}$
$\RR^I$-manifold and $CTM$ a finite dimensional involutive
distribution on $M$. The distribution $CTM$ is called \emph{Cartan
  distribution} and its dimension the \emph{Cartan dimension} of
$\ME$. Local smooth sections of $CTM$ are called \emph{Cartan fields}.
We are only concerned here with the case of ordinary diffieties,
\textit{i.e.}, the dimension of $CTM$ is equal to $1$. For the sake of
convenience, we use without distinction the notations $(M,CTM)$ and
$(M,\partial_M)$ to denote the ordinary diffiety $\ME$, where
$\partial_M$ is a basis vector field of $CTM$.  Let $\ME=(M,CTM)$ be a
diffiety with $\dim CTM=1$. Let $(U,\varphi,\RR^I)$ be a chart on $M$
and $\partial_M$ be a basis vector field of $CTM$ on $U$, then the
$4$-tuple $(U,\varphi,\RR^I,\partial_M)$ is called a chart on $\ME$. We
denote by $\ker\vartheta_M$ the kernel of the linear map
$\partial_M:C^{\infty}(M)\rightarrow C^{\infty}(M)$, \textit{i.e.},
\begin{equation*}
  \ker\partial_M=\{\vartheta\in C^{\infty}(M)\: /\:
  \partial_M\vartheta=0\}.
\end{equation*}
A real-valued $C^{\infty}$ function $\vartheta$ on $M$ such that
$\vartheta\in\ker\partial_M$ is called a \emph{local first integral}
on $\cM$. A local first integral is said to be trivial if it is a
constant (\cite{Fliess:MC98}).  Let $\phi:M\rightarrow N$ be a smooth
mapping.  As usual, we denote by $\phi_{*}:TM\rightarrow TN$ the
differential (or tangent) mapping of $\phi$, where $TM$ (resp. $TN$)
is the tangent bundle of $M$ (resp.  $N$), and by
$\phi^{*}:T^{*}N\rightarrow T^{*}M$ the dual differential mapping of
$\phi$, \textit{i.e.}, the dual mapping of $\phi_{*}$, where $T^{*}M$
(resp. $T^{*}N$) cotangent bundle of $M$ (resp. $N$).

A smooth mapping $\phi:M\rightarrow N$ is called a
\emph{Lie-B\"acklund morphism} of a diffiety $\ME=(M,CTM)$ into a
diffiety $\NE=(N,CTN)$, written $\phi:\ME\rightarrow\NE$, if it is
compatible with the Cartan distributions $CTM$ and $CTN$,
\textit{i.e.}, $\phi_{*}(CTM)\subset CTN$.
\begin{example}
  Denote by $\RR_{\infty}^m = \RR^m\times\RR^m\times \cdots$ the
  product of a countably infinite number of copies of $\RR^m$.
  Consider the ordinary diffiety $\FE=(F,CTF)$, where
  $F=\RR\times\RR_{\infty}^m$, and let
  $(U,\varphi,\RR\times\RR^m_{\infty},\partial_{F})$ be a chart on
  $\FE$ with local coordinates $\{t,w_i^{(\nu)}\:|\: i=1,\ldots,m\:
  ; \: \nu\geq 0\}$ and basis Cartan field
  \begin{equation*}
    \partial_{F} = \frac{\partial}{\partial t} + \sum_{i=1}^{m}\sum_{\nu\geq
      0} w_i^{(\nu+1)}\frac{\partial}{\partial w_i^{(\nu)}}.
  \end{equation*}
  The diffiety $\FE$, as above defined, is usually called
  \emph{trivial diffiety} and plays a central role in the
  Lie-B\"acklund approach of flatness. On some occasions, we will use
  the short notation
  \begin{equation*}
    \partial_F = \frac{\partial}{\partial t} + \sum_{\nu\geq
  0} w^{(\nu+1)}\frac{\partial}{\partial w^{(\nu)}}.
  \end{equation*}
  to represent the basis Cartan field $\partial_F$.\EOP
\end{example}

\section{Flat systems}
\label{sec:flat}
A diffiety $\ME$ is said to be (locally) of finite type if there
exists a (local) Lie-B\"acklund submersion $\pi:\ME\rightarrow\FE$
such that the fibers are finite dimensional.  The integer $m$ is
called the (local) differential dimension of $\ME$ (cf.
\cite{Fliess:AIHP97}).
\begin{definition}[\cite{Fliess:AIHP97,Fliess:MC98}]
  A system is a (local) Lie-B\"acklund fiber bundle
  $\sigma_M=(\ME,\RR,\lambda)$, where
  \begin{listeri}
  \item $\ME$ is a diffiety of finite type where a Cartan field
    $\partial_M$ has been chosen once for all;
  \item $\RR$ is endowed with a canonical structure of a diffiety,
    with global coordinate $t$ and Cartan field $\partial/\partial t$;
  \item $\lambda:\ME\rightarrow\RR$ is a Lie-B\"acklund submersion
    such that $\lambda_{*}(\partial_M)=\partial/\partial t$.
  \end{listeri}
\mbox{}\hfill\endproof
\end{definition}
The system $\sigma_F=(\FE,\RR,\pr)$, where $\pr$ is the natural
projection mapping $\pr:\{t,w_i^{(\nu)}\}\rightarrow t$ and $\FE$ a
trivial diffiety, is called a \emph{trivial system}.

The differential dimension of a system $\sigma_M=(\ME,\RR,\lambda)$,
denoted $\dimdiff\sigma_M$, is the differential dimension of the
associated diffiety $\ME$.
\begin{definition}[\cite{Fliess:AIHP97}]
  Two systems $\sigma_M=(\ME,\RR,\lambda)$ and
  $\sigma_N=(\NE,\RR,\delta)$ are said to be (differentially)
  equivalent, written $\sigma_M\simeq\sigma_N$, if and only if
\begin{listeri}
\item $\phi_{*}(\partial_M) = \partial_N$, where
  $\phi:\ME\rightarrow\NE$ is a Lie-B\"acklund isomorphism;
\item $\lambda=\phi^{*}\delta$.
\end{listeri}
\mbox{}\hfill\endproof
\end{definition}
A system $\sigma_M=(\ME,\RR,\lambda)$ is said to be (locally)
differentially flat, or simply flat if it is (locally) equivalent to a
trivial system. If $\{t,y_i^{(\nu)}\: | \: i=1,\ldots,m\: ; \: \nu
\geq 0\}$ are local coordinates on $\FE$ then $y=(y_1,\ldots,y_m)$ is
called a flat or linearizing output.

\section{Liouvillian Systems}
\label{sec:liouvillian_systems}
A diffiety $\SE=(S,CTS)$ is called a subdiffiety of a diffiety
$\ME=(M,CTM)$ if $S$ is a submanifold of $M$ and $CTS=TS\cap CT_SM$,
\textit{i.e.}, the natural embedding $\iota:\SE\rightarrow\ME$ is a
Lie-B\"acklund immersion\footnote{Since we consider only diffieties of
  Cartan dimension $1$, $CTS=CT_SM$ here.}. The fiber bundle $T_SM$
denotes here the restriction of the vector bundle $TM$ on $S$,
\textit{i.e.},
\begin{equation*}
  T_SM = \bigcup_{p\in S} T_pM.
\end{equation*}
The tangent mapping $\iota_{*}:TS\rightarrow TM$ is injective and the
image $\iota_{*}(TS) \subset T_SM$. If $\ME$ is of finite type, then
clearly $\SE$ is of finite type as well.
\begin{definition}
  A system $\sigma_M=(\ME,\RR,\lambda)$ is said to be a differential
  extension of $\sigma_S=(\SE,\RR,\delta)$, denoted
  $\sigma_S/\sigma_M$ or $\sigma_S\subset\sigma_M$, if and only if
\begin{listeri}
\item $\SE$ is a subdiffiety of $\ME$;
\item the restriction $\iota^{*}\lambda=\delta$, where $\iota^{*}$ is
  the dual mapping of the natural embedding $\iota:\SE\rightarrow\ME$.
\end{listeri}
\mbox{}\hfill\endproof
\end{definition}
Consider the differential extension $\sigma_M/\sigma_S$, with
$\dimdiff\sigma_M=m$.  Since $\ME$ is of finite type, there exists a
Lie-B\"acklund submersion $\pi:\ME\rightarrow\FE$ such that its fibers
are finite dimensional, say $n$. Assume now that $\sigma_M$ is not
flat and $\sigma_S$ is a flat with a flat output $y=(y_1,\ldots,y_m)$.
Define the canonical bundle morphism $\rho:TM\rightarrow TM/TS$ that
takes a vector $\zeta\in T_pM$, $p\in M$, to its equivalence class
$\zeta+T_pS$ and let $\tau:TM/TS\rightarrow M$ be the fiber bundle
whose fibers $\tau^{-1}(p)$, $p\in M$, are finite dimensional. If
$\{t,\eta_1,\ldots,\eta_s,u_i^{(\nu)}\: | \:i=1,\ldots,m\: ;\: \nu\geq
0\}$ are local coordinates on $\SE$ then the Cartan distribution of
$\SE$ is spanned by
\begin{equation*}
  \partial_S=\frac{\partial}{\partial t} +
  \sum_{j=1}^{s}F_j^1\frac{\partial}{\partial\eta_j} +
  \sum_{i=1}^{m}\sum_{\nu\geq 0} u_i^{(\nu+1)}\frac{\partial}{\partial
    u_i^{(\nu)}},
\end{equation*}
where $F_j^1$ are $C^{\infty}$ functions on $S$. Using the short
notation, $\partial_S$ can be written under the form
\begin{equation*}
  \partial_S=\frac{\partial}{\partial t} +
  F^1\frac{\partial}{\partial\eta} +
  \sum_{\nu\geq 0} u^{(\nu+1)}\frac{\partial}{\partial
    u^{(\nu)}},  
\end{equation*}
with $F^1=(F_1^1,\ldots,F_s^1)$. A local smooth section $\zeta$ of
$TM/TS$ is given by
\begin{equation*}
  \zeta = \sum_{j=1}^{d=n-s} F_j^2 \frac{\partial}{\partial\xi_j} =
  F^2\frac{\partial}{\partial\xi},
\end{equation*}
where $F_j^2$ are $C^{\infty}$ functions on $M$,
$F^2=(F_1^2,\ldots,F_d^2)$, with
\begin{equation*}
  \{t,\xi_1,\ldots,\xi_{d},\eta_1,\ldots,\eta_s,u_i^{(\nu)}\: |
  \: i=1,\ldots,m\: ;\: \nu\geq 0\}
\end{equation*}
local coordinates on $\ME$.
\begin{definition}
  Let $\sigma_M$ be a differential extension of a flat system
  $\sigma_S$ and $y$ a given flat output of $\sigma_S$. Then,
  $\sigma_S$ is called a \emph{flat subsystem} of $\sigma_M$ and the
  flat output $y$ of $\sigma_S$, a \emph{partial linearizing output}
  of $\sigma_M$. If, in addition, that flat output $y$ is such that
  $d=\dim \tau^{-1}(p)$, with $p\in M$ and $\tau:TM/TS\rightarrow M$
  the aforementioned fiber bundle, is minimal, then $d$ is called the
  \emph{defect}, $\sigma_S$ a \emph{maximal flat subsystem} and $y$ a
  \emph{maximal linearizing output} of
  $\sigma_M$.\mbox{}\hfill\endproof
\end{definition}
Consider now the classical dynamics
\begin{equation}
  \label{eq:nl}
  \dot{x} = F(x,u),\qquad (x,u)\in X\times U \subset \RR^n\times\RR^m,
\end{equation}
where $x=(x_1,\ldots,x_n)$, $u=(u_1,\ldots,u_m)$ and
$F=(F_1,\ldots,F_n)$ is a m-tuple of $C^{\infty}$ functions on
$X\times U$. To (\ref{eq:nl}) we can associate a diffiety
$\ME=(\RR\times\RR^n\times\RR^m\times\RR^m_{\infty},\partial_M)$ with
local coordinates $\{t,x_1,\ldots,x_n,u_i^{(\nu)}\: | \:
i=1,\ldots,m\, ;\, \nu \geq 0\}$ and Cartan field
\begin{equation*}
  \partial_M = \frac{\partial}{\partial t} +
  \sum_{j=1}^{n} F_j\frac{\partial}{\partial x_j} +
  \sum_{i=1}^{m}\sum_{\nu\geq 0} u_i^{(\nu+1)}\frac{\partial}{\partial
    u_i^{(\nu)}}.
\end{equation*}
A subsystem of (\ref{eq:nl}) is given by a diffiety
$\SE=(S,\partial_S)$, with local coordinates
\begin{equation*}
  \{t,\eta_1,\ldots,\eta_s,u_i^{(\nu)}\: | \: i=1,\ldots,m\: ;\:
  \nu\geq 0\}
\end{equation*}
and a basis Cartan field
\begin{equation*}
  \partial_S = \frac{\partial}{\partial t} +
  \sum_{j=1}^{s}F_j^1(\eta,u)\frac{\partial}{\partial\eta_j} +
  \sum_{i=1}^{m}\sum_{\nu\geq 0} u_i^{(\nu+1)}\frac{\partial}{\partial
    u_i^{(\nu)}},
\end{equation*}
where $\eta=(\eta_1,\ldots,\eta_s)\in X^1\subset \RR^s$ and $F_j^1$
are $C^{\infty}$ functions on $X^1\times U$. A local section $\zeta$ of
$TM/TS$ is given by
\begin{equation*}
  \zeta = \sum_{j=1}^{d=n-s} F_j^2(\eta,\xi,u)
  \frac{\partial}{\partial\xi_j} =
  F^2(x,u)\frac{\partial}{\partial\xi},
\end{equation*}
where $\xi=(\xi_1,\ldots,\xi_{d})\in X^2\subset \RR^{d}$,
$F^2=(F_1^2,\ldots,F_d^2)$ and $F_j^2$ are $C^{\infty}$ functions on
$X^1\times X^2\times U=X\times U$. The vector $\xi$ represents only
the complement of $\eta$ (by renumbering the $x_i$'s if needed) to
form the vector $x$, \textit{i.e.}, $x=(\eta,\xi)$. We can assume,
in the sequel, that coordinates $\eta$ and $\xi$ are given by the
projection mappings $\pr_1$ and $\pr_2$
\begin{equation*}
  \begin{array}{rclcl}
    \pr_1 &:& \{t,x_1,\ldots,x_n,u_i^{(\nu)}\} & \rightarrow &
    \{t,\eta_1,\ldots,\eta_s,u_i^{(\nu)}\} \\
    \pr_2 &:& \{t,x_1,\ldots,x_n,u_i^{(\nu)}\} & \rightarrow &
    \{t,\xi_1,\ldots,\xi_d\}
  \end{array}  
\end{equation*}
Thus, if $\sigma_M$ is a differential extension of a flat system
$\sigma_S$, then dynamics (\ref{eq:nl}) admits the following
decomposition
\begin{equation*}
  \dot{x} = \begin{pmatrix}\dot{\eta}\\ \hdots \\ 
    \dot{\xi}\end{pmatrix} = \begin{pmatrix} F^1(\eta,u) \\ \hdotsfor{1}
    \\ F^2(\eta,\xi,u) \end{pmatrix}.
\end{equation*}
\begin{definition}
  \label{def:pv}
  A system $\sigma_M$ is called a Piccard-Vessiot extension of a system
  $\sigma_S$ if~:
  \begin{listeri}
    \item $\sigma_S$ is flat;
    \item the Cartan field $\partial_M$ is of the form
      \begin{equation*}
        \partial_M =
        A(\eta,u)\xi\frac{\partial}{\partial\xi} + \partial_S.
      \end{equation*}
    \item $\ker\partial_M=\ker\partial_S$.
    \end{listeri}
    A differential extension $\sigma_M/\sigma_S$ such that $\sigma_M$
    is a Piccard-Vessiot extension of $\sigma_S$ is said to be a
    Piccard-Vessiot extension.\mbox{}\hfill\endproof
\end{definition}
In the case of a Piccard-Vessiot system, (\ref{eq:nl}) takes the form
\begin{equation}
  \label{eq:PVSystem}
  \dot x =  \begin{pmatrix}\dot{\eta}\\ \hdots \\ 
  \dot{\xi}\end{pmatrix} = \begin{pmatrix} F^1(\eta,u) \\ \hdotsfor{1}
  \\ A(\eta,u)\xi \end{pmatrix},
\end{equation}
where $A(\eta,u)$ is a $n\times n$ matrix of $C^{\infty}(X^1\times
U)$ functions.
\begin{proposition}
  Let $\sigma_M/\sigma_S$ be a Piccard-Vessiot extension. Then,
  $\sigma_M$ is locally controllable.\hfill\endproof
\end{proposition}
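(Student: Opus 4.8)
The plan is to prove local controllability of $\sigma_M$ by showing that $\sigma_M$ admits no nontrivial local first integral; recall that, in the diffiety formulation, a system is (locally) controllable precisely when $\ker\partial=\RR$, i.e.\ when every local first integral is trivial (a constant). Everything then rests on the two hypotheses at hand: that $\sigma_S$ is flat and that $\ker\partial_M=\ker\partial_S$.

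The substantive step is that a flat system has only trivial first integrals. If $\sigma_S$ is flat there is a Lie-B\"acklund isomorphism $\phi:\SE\to\FE$ onto a trivial diffiety with $\phi_{*}(\partial_S)=\partial_F$; consequently $\partial_S\circ\phi^{*}=\phi^{*}\circ\partial_F$ as maps $C^{\infty}(F)\to C^{\infty}(S)$, and since $\phi^{*}$ is bijective one gets $\ker\partial_S=\phi^{*}(\ker\partial_F)$. It therefore suffices to compute $\ker\partial_F$ on the trivial diffiety. If $\vartheta\in C^{\infty}(F)$ involves only $t$ and the $w_i^{(\nu)}$ with $\nu\leq k$, then in $\partial_F\vartheta=\partial\vartheta/\partial t+\sum_i\sum_{\nu\leq k}w_i^{(\nu+1)}\,\partial\vartheta/\partial w_i^{(\nu)}$ the variables $w_i^{(k+1)}$ occur only in the last block, so $\partial_F\vartheta=0$ forces $\partial\vartheta/\partial w_i^{(k)}=0$ for every $i$; a downward induction on $\nu$, followed by the $\partial/\partial t$ term, yields $\vartheta\in\RR$. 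Hence $\ker\partial_F=\RR$, and therefore $\ker\partial_S=\RR$.

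The proof then concludes in one line: by condition (iii) of Definition~\ref{def:pv} a Piccard-Vessiot extension satisfies $\ker\partial_M=\ker\partial_S$, so $\ker\partial_M=\RR$, and hence $\sigma_M$ is locally controllable.

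The one point that genuinely requires care is the interface with the terminology: one must make explicit that ``locally controllable'' is to be read as ``$\ker\partial_M=\RR$'' --- equivalently, that $\sigma_M$ has no nontrivial autonomous element (a nonconstant first integral is such an element, and conversely, in the smooth category, an autonomous element can be integrated into a nonconstant first integral by variation of parameters). Should controllability instead be meant in the reachable-set sense, one would argue from flatness of $\sigma_S$ that $(\eta,u)(\cdot)$ can be prescribed freely and then treat $\dot\xi=A(\eta,u)\xi$ as a linear time-varying system along the chosen reference; but the first-integral argument is the one tailored to hypothesis (iii), and it is the one I would carry out.
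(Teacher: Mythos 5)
Your proof is correct and follows essentially the same route as the paper's: flatness gives $\ker\partial_S=\RR$, condition (iii) of the definition transfers this to $\ker\partial_M=\RR$, and the absence of nontrivial first integrals yields local controllability (the paper cites Fliess et al.\ for this last equivalence rather than discussing it). You merely spell out the computation of $\ker\partial_F=\RR$ on the trivial diffiety, which the paper takes for granted.
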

\begin{proof}
  Since $\sigma_S$ is flat, $\sigma_S\simeq\sigma_F$, hence
  $\ker\partial_M=\ker\partial_S=\RR$, \textit{i.e.}, any local
  first integral of $\sigma_M$ is trivial, and it follows that
  $\sigma_M$ is locally controllable (cf \cite{Fliess:97SCL}).
\end{proof}
\begin{definition}\label{df:liou}
  Let $\sigma_M$ be a differential extension of a flat system
  $\sigma_S$ and $y$ a flat output of $\sigma_S$. Then, $\sigma_M$ is
  said to be a Liouvillian extension of $\sigma_S$, or simply
  $\sigma_M/\sigma_S$ is a Liouvillian extension, if and only if there exists a
  nested chain of subsystems $\sigma_S=\sigma_{S_0} \subset
  \sigma_{S_1} \subset \cdots \subset \sigma_{S_{d}} = \sigma_M$, with
  $\sigma_{S_j}=(\SE_j,\RR,\delta_j)$ and
  $\SE_j=(S_j,\partial_{S_j})$, such that, for $j=1,\ldots,d$,
  $\ker\partial_{S_j}=\ker\partial_{S_{j-1}}$, where either
  \begin{listeri}
    \item $\partial_{S_j} = \alpha_j
      \partial/\partial\xi_j + \partial_{S_{j-1}}$, $\alpha_j\in
      C^{\infty}(S_{j-1})$, or
    \item $ \partial_{S_j} = \alpha_j \xi_j
      \partial/\partial\xi_j + \partial_{S_{j-1}}$, $\alpha_j\in
      C^{\infty}(S_{j-1})$.
  \end{listeri}
  If $\sigma_S$ is maximal (resp. partial), \textit{i.e.}, $d$ is
  the defect of $\sigma_M$, then $\sigma_M$ is called
  \emph{Liouvillian system} (resp. partial Liouvillian system) and $y$
  \emph{Liouvillian output} (resp. partial Liouvillian output).
  \mbox{}\hfill\endproof
\end{definition}
\begin{remark}
  According to the definition, a local section $\zeta_j$ of
  $TS_j/TS_{j-1}$ is given either by
\begin{listeri}
\item $\zeta_j =
  \alpha_j(\eta,\xi_1,\ldots,\xi_{j-1},u)\partial/\partial\xi_j$ (hence
  $\dot{\xi}_j = \alpha_j(\eta,\xi_1,\ldots,\xi_{j-1},u)$ and
  $\xi_j=\int \alpha_j$) or
\item $\zeta_j =
  \alpha_j(\eta,\xi_1,\ldots,\xi_{j-1},u)\xi_j\partial/\partial\xi_j$
  (hence $\dot{\xi}_j = \alpha_j(\eta,\xi_1,\ldots,\xi_{j-1},u)\xi_j$
  and $\xi_j = e^{\int \alpha_j}$).
\end{listeri}
Hence, Liouvillian extensions are extensions by integrals (i) or
exponential of integral (ii), usually called extensions by
quadratures.\EOP
\end{remark}
Actually, it is easy to see that Liouvillian extensions are a
particular case of Piccard-Vessiot extensions. For (ii) of
definition \ref{df:liou}, it is clear that $\sigma_{S_j}$ is a
Piccard-Vessiot extension of $\sigma_{S_{j-1}}$. The extension by
integral can be obtained by considering the Piccard-Vessiot extension
$\sigma_{S_j}$ of $\sigma_{S_{j-1}}$ with Cartan field given by
\begin{equation*}
  \partial_{S_j} = \xi_{j+1}\frac{\partial}{\partial\xi_j} +
  (\dot{\alpha}_j/\alpha_j)\xi_{j+1}\frac{\partial}{\partial\xi_{j+1}},
\end{equation*}
with $\dot{\alpha}_j=d\alpha_j/dt$ and $\alpha_j\in C^{\infty}(S_j)$.
\begin{remark}
  Notice that an arbitrary linearizing output $y$ for $\sigma_S$ does not
  necessarily give rise to a Liouvillian system. Therefore, the
  Liouvillian character of a system strongly depends on the
  choice of $y$.\hfill\EOP
\end{remark}
Let $T_n(C^{\infty}(X^1\times\RR^m))$ the set of $n\times n$ lower
triangular matrices with components in $C^{\infty}(X^1\times\RR^m)$
and $U_n(C^{\infty}(X^1\times\RR^m))$ the subset of
$T_n(C^{\infty}(X^1\times\RR^m))$ such that all the diagonal
components are equals to $1$.
\begin{theorem}
  Consider the Piccard-Vessiot system (\ref{eq:PVSystem}). If
  $A(\eta,u)\in U_d(C^{\infty}(X^1\times\RR^m))$ then
  (\ref{eq:PVSystem}) is Liouvillian.\hfill\endproof
\end{theorem}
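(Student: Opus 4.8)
The plan is to integrate the linear part $\dot\xi=A(\eta,u)\xi$ by a cascade of quadratures, so as to exhibit a nested chain $\sigma_S=\sigma_{S_0}\subset\sigma_{S_1}\subset\cdots\subset\sigma_{S_d}=\sigma_M$ of the shape demanded by Definition~\ref{df:liou}, adjoining one new coordinate at each step. The heuristic is the one behind the name ``Picard--Vessiot'': a structure matrix in the unipotent group $U_d$ generates a nilpotent (hence solvable) Lie algebra, so the associated linear system is solvable by quadratures, and the unit diagonal of $A$ is precisely what makes each successive elimination a \emph{pure} integral, i.e. a type~(i) extension in the sense of Definition~\ref{df:liou}, rather than a mixed affine equation.

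Concretely, since $A$ is lower triangular with ones on the diagonal, (\ref{eq:PVSystem}) reads $\dot\xi_1=\xi_1$ and $\dot\xi_j=\xi_j+\sum_{k=1}^{j-1}A_{jk}(\eta,u)\,\xi_k$ for $2\le j\le d$. Working on the open set $\{\xi_1\neq 0\}$, I would set $\zeta_1:=\xi_1$ and $\zeta_j:=\xi_j/\xi_1$ for $j\ge 2$; this is a local diffeomorphism, with inverse $\xi_1=\zeta_1$, $\xi_j=\zeta_1\zeta_j$. Differentiating and using $\dot\xi_1=\xi_1$ one obtains
\[
  \dot\zeta_1=\zeta_1,\qquad
  \dot\zeta_j=A_{j1}(\eta,u)+\sum_{k=2}^{j-1}A_{jk}(\eta,u)\,\zeta_k\quad(2\le j\le d),
\]
so that $\dot\zeta_j$ depends only on $(\eta,u,\zeta_2,\dots,\zeta_{j-1})$. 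One then defines $\SE_j=(S_j,\partial_{S_j})$ with $S_j$ carrying local coordinates $\{t,\eta,\zeta_1,\dots,\zeta_j,u^{(\nu)}\}$, Cartan field $\partial_{S_1}=\zeta_1\,\partial/\partial\zeta_1+\partial_{S_0}$ (a type~(ii) link, $\alpha_1=1$) and, for $2\le j\le d$, $\partial_{S_j}=\alpha_j\,\partial/\partial\zeta_j+\partial_{S_{j-1}}$ with $\alpha_j=A_{j1}+\sum_{k=2}^{j-1}A_{jk}\zeta_k\in C^\infty(S_{j-1})$ (a type~(i) link). By construction $\SE_0=\SE$, each $\SE_{j-1}$ is a subdiffiety of $\SE_j$, $\SE_d$ is locally diffeomorphic to $\ME$ through $\xi=\xi(\zeta)$, and all maps to $\RR$ are the evident projections; taking $\sigma_S$ to be a maximal flat subsystem this exhibits $\sigma_M$ as a Liouvillian system with Liouvillian output $y$ (and otherwise as a partial Liouvillian system).

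The step I expect to be the main obstacle is the requirement $\ker\partial_{S_j}=\ker\partial_{S_{j-1}}$ at each link. Since $\sigma_S$ is flat, $\ker\partial_{S_0}=\RR$, so one must check that no quadrature in the tower produces a new nonconstant first integral of $C^\infty(S_j)$ — equivalently, that none of the antiderivatives $\int\alpha_j$ is already realised by an admissible function on $S_{j-1}$ — and the purely exponential first link $\dot\zeta_1=\zeta_1$ deserves particular scrutiny. The natural way to discharge this is to argue inductively that each $\sigma_{S_j}/\sigma_{S_{j-1}}$ is itself a Picard--Vessiot extension: for a type~(ii) link this is immediate from the form of $\partial_{S_j}$, and a type~(i) link $\dot\zeta_j=\alpha_j$ is realised as a Picard--Vessiot extension by the $2\times2$-block construction recalled after Definition~\ref{df:liou}, after which condition~(iii) of Definition~\ref{def:pv} propagates the equality of kernels down the whole chain. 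The remaining verifications — that the $\zeta$-coordinates genuinely realise the chain of subdiffieties compatibly with the fibrations over $\RR$, and that the outcome is independent of the renumbering of the $\xi_i$ — are routine.
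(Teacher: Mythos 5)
Your argument is correct and follows essentially the same route as the paper's proof: pass to the quantities $\xi_1$ and $\xi_j/\xi_1$, observe that the unit diagonal of $A$ makes the $\xi_j/\xi_1$ term cancel so that each successive coordinate is recovered by a pure quadrature in the previously obtained ones, and induct on $j$. You are in fact more explicit than the paper about assembling the nested chain of subdiffieties required by Definition~\ref{df:liou} and about the condition $\ker\partial_{S_j}=\ker\partial_{S_{j-1}}$, which the paper's proof leaves entirely implicit.
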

\begin{proof}
  Let $A=(a_{i,j})_{i,j=1,d}\in U_d(C^{\infty}(X^1\times\RR^m))$,
  then
  \begin{equation*}
    \dot\xi_i=\sum_{j=1}^{i}a_{i,j}\xi_j,\quad \textrm{with } a_{i,i}=1,
    \textrm{ }i=1,\ldots,d.
  \end{equation*}
  In particular,
  \begin{equation*}
    \dot\xi_1 = \xi_1,
  \end{equation*}
  and it follows that $\xi_1$ is an exponential of integral. Next,
  \begin{equation*}
    \dot{\wideparen{\left(\frac{\xi_2}{\xi_1}\right)}} =
    \frac{\dot\xi_2}{\xi_1}-\frac{\xi_2}{\xi_1} = a_{2,1},
  \end{equation*}
  in the other words,
  \begin{equation*}
    \xi_2 = \xi_1\int a_{2,1}.
  \end{equation*}
  Finally, differentiating $\xi_i/\xi_1$, $i=1,\ldots,d$, gives
  \begin{equation*}
    \dot{\wideparen{\left(\frac{\xi_i}{\xi_1}\right)}} = a_{i,1} +
    a_{i,2}\frac{\xi_2}{\xi_1} + \ldots +
    a_{i,i-1}\frac{\xi_{i-1}}{\xi_1}.
  \end{equation*}
  Now, making the appropriate induction assumption, we deduce that
  $\xi_i$, $i=2,\ldots,d$, can be obtained by mean of the integral
  \begin{equation*}
    \xi_i=\xi_1\int a_{i,1} +
    a_{i,2}\frac{\xi_2}{\xi_1} + \ldots +
    a_{i,i-1}\frac{\xi_{i-1}}{\xi_1},
  \end{equation*}
  which concludes the proof.
\end{proof}
\begin{theorem}
  Consider the Piccard-Vessiot system (\ref{eq:PVSystem}). If
  $A(\eta,u)\in T_d(C^{\infty}(X^1\times\RR^m))$ then
  (\ref{eq:PVSystem}) is Liouvillian.\hfill\endproof
\end{theorem}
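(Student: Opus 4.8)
The plan is to reduce the lower-triangular case to the unipotent one already treated in the previous theorem by splitting off the diagonal of $A$. Write $A=(a_{ij})_{1\le i,j\le d}$ with $a_{ij}=0$ for $j>i$, so that the Cartan field of $\sigma_M$ gives
\begin{equation*}
  \dot\xi_i=\sum_{j=1}^{i}a_{ij}\xi_j=a_{ii}\xi_i+\sum_{j=1}^{i-1}a_{ij}\xi_j,\qquad i=1,\dots,d.
\end{equation*}
The first equation $\dot\xi_1=a_{11}\xi_1$ has $a_{11}\in C^{\infty}(X^1\times\RR^m)\subset C^{\infty}(S)$, so it is an extension of type (ii) of Definition \ref{df:liou}: one adjoins $\xi_1$ to $\sigma_S=\sigma_{S_0}$ with $\partial_{S_1}=a_{11}\xi_1\,\partial/\partial\xi_1+\partial_{S_0}$, and $\ker\partial_{S_1}=\ker\partial_{S_0}$ because $\sigma_S$ is flat, hence $\ker\partial_{S_0}=\RR$ and no non-constant first integral can be produced by a single exponential quadrature.

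Next, imitating the proof of the preceding theorem, I would pass to the fibre coordinates $z_i=\xi_i/\xi_1$ for $i=2,\dots,d$ (with $z_1\equiv1$); on the open set $\{\xi_1\ne0\}$ the map $(\xi_1,\dots,\xi_d)\mapsto(\xi_1,z_2,\dots,z_d)$ is a valid change of fibre coordinates for the extension $\sigma_M/\sigma_S$. A direct computation using $\dot\xi_1/\xi_1=a_{11}$ gives
\begin{equation*}
  \dot z_i=\frac{\dot\xi_i}{\xi_1}-z_i\,\frac{\dot\xi_1}{\xi_1}=(a_{ii}-a_{11})\,z_i+\sum_{j=1}^{i-1}a_{ij}\,z_j,\qquad i=2,\dots,d,
\end{equation*}
so that, over $\sigma_S$, the $z_i$ obey a lower-triangular \emph{affine} system whose homogeneous part has diagonal $a_{ii}-a_{11}\in C^{\infty}(S)$ and whose inhomogeneous term $\sum_{j<i}a_{ij}z_j$ involves only $\eta$, $u$ and the previously constructed $z_2,\dots,z_{i-1}$ (in particular $\dot z_2=(a_{22}-a_{11})z_2+a_{21}$). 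When $A$ is unipotent all the differences $a_{ii}-a_{11}$ vanish, each $\dot z_i$ is already of type (i), and one recovers exactly the previous theorem; in general the diagonal still has to be removed.

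It then remains to show that a scalar affine equation $\dot z=\beta z+\gamma$, with $\beta,\gamma$ smooth functions on the current subsystem, is Liouvillian over it: one adjoins the integrating factor $w=e^{\int\beta}$ by a type-(ii) extension, after which $\dot{\wideparen{(z/w)}}=\gamma/w$ is a type-(i) extension, so that $z=w\int\gamma/w$ is obtained by two quadratures. Applying this to $z_2,\dots,z_d$ in turn (with $\beta_i=a_{ii}-a_{11}$, $\gamma_2=a_{21}$ and $\gamma_i=\sum_{j<i}a_{ij}z_j$), together with the initial step for $\xi_1$, produces a nested chain of type-(i)/(ii) extensions with $\ker\partial_{S_j}=\ker\partial_{S_{j-1}}$ throughout, which is the assertion.

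The main obstacle is precisely the diagonal of $A$. Dividing by $\xi_1$ cancels a scaling only when $a_{ii}=a_{11}$, so in general the reduced equations for the $z_i$ are genuinely affine and accommodating each of them forces an additional integrating-factor variable $w_i$; the delicate point of the argument is therefore the bookkeeping needed to arrange all of $\xi_1,\ w_2,\ z_2/w_2,\ w_3,\ z_3/w_3,\dots$ into a single nested chain of the expected length and to verify that $\ker\partial_{S_j}$ never grows along it — the individual quadratures themselves being elementary, exactly as in the previous theorem.
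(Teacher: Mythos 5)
Your proposal is correct and follows essentially the same route as the paper: the paper also reduces to the quotients $\xi_i/\xi_1$, introduces the antiderivatives $c_i$ of the diagonal entries $a_{i,i}$, and writes $\xi_i=\xi_1 e^{c_i-c_1}\int\bigl(a_{i,1}+\cdots+a_{i,i-1}\xi_{i-1}/\xi_1\bigr)e^{c_1-c_i}$, which is exactly your integrating-factor formula $z_i=w_i\int\gamma_i/w_i$ with $w_i=e^{c_i-c_1}$. The only difference is that you make explicit the chain of type-(i)/(ii) extensions and flag the bookkeeping of the auxiliary variables $w_i$, a point the paper's induction passes over silently.
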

\begin{proof}
  Let $A=(a_{i,j})_{i,j=1,d}\in T_d(C^{\infty}(X^1\times\RR^m))$,
  \textit{i.e.},
  \begin{equation*}
    \dot\xi_i=\sum_{j=1}^{i}a_{i,j}\xi_j,\quad 
    \textrm{ }i=1,\ldots,d.
  \end{equation*}
  and set
  \begin{equation*}
    \dot c_i = a_{i,i},\quad i=1,\ldots,d.
  \end{equation*}
  So the $c_i$'s are integrals of elements of the flat subsystem
  $\sigma_S$. For $i=1$, we get
  \begin{equation*}
    \dot\xi_1 = a_{1,1}\xi_1,
  \end{equation*}
  and it follows that $\xi_1=e^{c_1}$. Next,
  \begin{equation*}
    \dot{\wideparen{\left(\frac{\xi_2}{\xi_1}\right)}} =
    \frac{\dot\xi_2}{\xi_1} - \frac{\dot\xi_1}{\xi_1}\frac{\xi_2}{\xi_1} =
    a_{2,1} + (a_{2,2}-a_{1,1})\frac{\xi_2}{\xi_1},
  \end{equation*}
  in the other words,
  \begin{equation*}
    \xi_2 = \xi_1 e^{c_2-c_1}\int a_{2,1} e^{c_1-c_2}.
  \end{equation*}
  Finally, differentiating $\xi_i/\xi_1$, $i=2,\ldots,d$, gives
  \begin{eqnarray*}
    \dot{\wideparen{\left(\frac{\xi_i}{\xi_1}\right)}} = a_{i,1} +
    a_{i,2}\frac{\xi_2}{\xi_1} + \ldots
    + a_{i,i-1}\frac{\xi_{i-1}}{\xi_1}
    +(a_{i,i}-a_{1,1})\frac{\xi_i}{\xi_1}.
  \end{eqnarray*}
  Now, making the appropriate induction assumption, we deduce that
  $\xi_i$, $i=2,\ldots,d$, can be obtained by mean of the relation
  \begin{eqnarray*}
    \xi_i &= \xi_1 e^{c_i-c_1} \cdot
    \int \left(a_{i,1} + a_{i,2}\frac{\xi_2}{\xi_1} + \ldots +
      a_{i,i-1}\frac{\xi_{i-1}}{\xi_1}\right)e^{c_1-c_i},
  \end{eqnarray*}
  which concludes the proof.
\end{proof}
\section{The rolling bodies}
Let us illustrate the class of Liouvillian through the concrete case
of rolling bodies (see \cite{Chelouah:03FM} for a more larger
treatment). The kinematic equations of motion of the contact point
between two bodies rolling on top of each other are given in geodesic
coordinates by
\begin{equation}
  \begin{array}{rcl}
    \dot{v}_1 &=& u_1, \\
    \dot{w}_1 &=& \mbox{$\frac{1}{B}$} u_2, \\
    \dot{v}_2 &=& u_1\cos\psi - u_2\sin\psi, \\
    \dot{w}_2 &=& -\mbox{$\frac{1}{C}$} (u_1\sin\psi +
    u_2\cos\psi), \\
    \dot{\psi} &=& \mbox{$\frac{B_{v_1}}{B}$}u_2 -
    \mbox{$\frac{C_{v_2}}{C}$} (u_1\sin\psi + u_2\cos\psi),
  \end{array}
  \label{rolling-surfaces-system}
\end{equation}
where $B=B(v_1,w_1)$, $C=C(v_2,w_2)$, $B_{v_1}=\partial B/\partial
v_1$ and $C_{v_2}=\partial C/\partial v_2$. In the case of the well
known plate ball problem, $B\equiv 1$ and $C=\cos(v_2)$ and
(\ref{rolling-surfaces-system})
takes the form
\begin{equation}
  \begin{array}{rcl}
    \dot{v}_1 &=& u_1, \\
    \dot{w}_1 &=& u_2, \\
    \dot{v}_2 &=& u_1\cos\psi - u_2\sin\psi, \\
    \dot{w}_2 &=& -\frac{1}{\cos v_1} (u_1\sin\psi +
    u_2\cos\psi), \\
    \dot{\psi} &=& \tan v_2 (u_1\sin\psi + u_2\cos\psi).
  \end{array}
  \label{rolling-surfaces-system-pb}
\end{equation}
This system is Liouvillian and a Liouvillian output is given
by (see \cite{Chelouah:03FM} for the details)
\begin{equation}
  \label{eq:mlopb}
  \begin{array}{rcl}
  x &=& v_1 - v_2\cos\psi,\\
  y &=& w_1 + w_2\sin\psi.
  \end{array}
\end{equation}
The idea to give a new formulation of Liouvillian systems within the
mathematical framework of diffieties was actually motivated by the
study of the rolling bodies system. As matter of fact, it is not clear
whether this system is Liouvillian within the differential algebraic
setting (cf \cite{kad0}).

First, notice that system \ref{rolling-surfaces-system-pb} is not
under a suitable form to describe a system in the differential
algebraic setting (the associated differential field extension in not
finitely generated). However, using transformations
$\sigma=\tan(\psi/2)$ and $\xi=\tan(v_2/2)$,
(\ref{rolling-surfaces-system}) writes (plate ball case)
\begin{equation}
  \label{eq:RollingBodiesAlgebraic}
  \begin{array}{rcl}
    \dot{v}_1 &=& u_1,\\
    \dot{w}_1 &=& u_2,\\
    \dot{\xi} &=&
    \frac{1+\xi^2}{2(1+\sigma^2)}\big[(1-\sigma^2)u_1-2\sigma u_2\big],\\
    \dot{w}_2 &=& -\frac{1+\xi^2}{(1-\xi^2)(1+\sigma^2)}\big[2\sigma u_1 +
    (1-\sigma^2)u_2\big],\\
    \dot{\sigma} &=& \frac{\xi}{1-\xi^2}\big[2\sigma u_1 +
    (1-\sigma^2)u_2\big],
  \end{array}
\end{equation}
and becomes explicit and rational. The associated differential ideal
is thus prime and leads to finitely generated differential field
extension. Nevertheless, the Liouvillian output (\ref{eq:mlopb}) takes
now the form
\begin{equation}
  \label{eq:mlopbalg}
  \begin{array}{rcl}
    \tilde x &=& v_1 -\frac{1-\sigma^2}{1+\sigma^2}\: 2\arctan\xi, \\
    \tilde y &=& w_1 + \frac{2\sigma}{1+\sigma^2}\:w_2.
  \end{array}
\end{equation}
If we denote by $\RR\langle v_1,w_1,\xi,w_2,\sigma,u_1,u_2\rangle$ the
differential field generated by $\RR$ and the variables
$\{v_1,w_1,\xi,w_2,\sigma,u_1,u_2\}$, and $\overline{\RR\langle
  v_1,w_1,\xi,w_2,\sigma,u_1,u_2\rangle}$ its algebraic closure, then
$\tilde x$ and $\tilde y$ are not in $\overline{\RR\langle
  v_1,w_1,\xi,w_2,\sigma,u_1,u_2\rangle}$, and it follows that
(\ref{eq:mlopbalg}) is not a Liouvillian output for
(\ref{eq:RollingBodiesAlgebraic}) in this context.

\bibliographystyle{plain}

\begin{thebibliography}{10}

\bibitem{Chelouah:03FM}
{A. Chelouah and Y. Chitour}.
\newblock On the motion planning of rolling surfaces.
\newblock {\em Forum Mathematicum}, 15(5):727--758, 2003.

\bibitem{charlet:91siam}
B.~Charlet, J.~L\'evine, and R.~Marino.
\newblock Sufficient conditions for dynamic state feedback linearization.
\newblock {\em SIAM J. Control and Optimization}, 29:38--57, 1991.

\bibitem{kad0}
A.~Chelouah.
\newblock Extension of differential flat field and liouvillian systems.
\newblock In {\em Proc. IEEE Conf. Dec. Contr.}, volume~5, pages 4268--4273,
  1997.
\newblock San Diego.

\bibitem{Fliess:CRAS93}
M.~Fliess, J.~L\'evine, P.~Martin, and P.~Rouchon.
\newblock Lin\'earisation par bouclage dynamique et transformation de
  {L}ie-{B}{\"a}cklund.
\newblock {\em C. R. Acad. Sci. Paris}, I-317:981--986, 1993.

\bibitem{Fliess:AIHP97}
M.~Fliess, J.~L\'evine, P.~Martin, and P.~Rouchon.
\newblock Deux applications de la g\'eom\'etrie locales des diffi\'et\'es.
\newblock {\em Ann. Inst. Henri Poincar\'e}, 66(3):275--292, 1997.

\bibitem{Fliess:MC98}
M.~Fliess, J.~L\'evine, P.~Martin, and P.~Rouchon.
\newblock Nonlinear control and diffieties, with an application to physics.
\newblock {\em Contemporary Mathematics}, 219:81--92, 1998.

\bibitem{Fliess:TAC99}
M.~Fliess, J.~L\'evine, P.~Martin, and P.~Rouchon.
\newblock A {L}ie-{B}{\"a}cklund approach to equivalence and flatness of
  nonlinear systems.
\newblock {\em IEEE Trans. on Automatic Control}, 44(5):922--937, 1999.

\bibitem{Fliess:97SCL}
M.~Fliess, J.~Lévine, P.~Martin, F.~Ollivier, and P.~Rouchon.
\newblock A remark on nonlinear accessibility conditions and infinite
  prolongations.
\newblock {\em Systems and Control Letters}, 31(2):77--83, 1997.

\bibitem{Kiss:These}
B.~Kiss.
\newblock {\em Planification de trajetoires et commande d'une classe de
  syst\'emes m\'ecaniques plats et Liouvilliens}.
\newblock PhD thesis, \'Ecole des Mines de Paris, 2001.

\bibitem{Kiss:IJRR2002}
B.~Kiss, J.~L\'evine, and B.~Lantos.
\newblock On motion planning for robotic manipulation with permanent rolling
  contacts.
\newblock {\em Int. J. of Robotics Research}, 21(5-6):443--461, May 2002.

\bibitem{Sira:99aACC}
Hebertt Sira-Ram\'irez.
\newblock Soft landing on a planet : {A} trajectory planning approach for the
  {L}iouvillian model.
\newblock In {\em Proc. American Contr. Conf.}, pages 2936--2940, 1999.

\bibitem{Sira:99ACC}
Hebertt Sira-Ram\'irez, Rafael Castro-Linares, and Eduardo Lic\'eaga-Castro.
\newblock Regulation of the longitudinal dynamics of an helicopter : {A}
  {L}iouvillian systems approach.
\newblock In {\em Proc. American Contr. Conf.}, pages 2752--2756, 1999.
\newblock San Diego, California.

\bibitem{Zharinov:book92}
V.V. Zharinov.
\newblock {\em Geometrical aspect of partial differential equations}.
\newblock World Scientific, Singapour, 1992.

\end{thebibliography}

\end{document}